\newtheorem{example}{Example}
\newtheorem{theorem}{Theorem}
\newtheorem{lemma}[theorem]{Lemma}
\newtheorem{proposition}[theorem]{Proposition}
\newtheorem{remark}[theorem]{Remark }
\definecolor{verde}{rgb}{0,0.7,0}
\newcommand{\be}{\begin{equation}}
\newcommand{\ee}{\end{equation}}
\newcommand{\bea}{\begin{eqnarray}}
\newcommand{\eea}{\end{eqnarray}}
\newcommand{\bean}{\begin{eqnarray*}}
\newcommand{\eean}{\end{eqnarray*}}
\begin{document}

\title{\LARGE \bf Modeling the Cooperative Process of Learning a Task}

 \author{Giulia De Pasquale and Maria Elena Valcher
 \thanks{Accepted for presentation at the European Control Conference (ECC 2022), London, UK. 
G. De Pasquale and M.E. Valcher are with
 the Dipartimento di Ingegneria dell'Informazione
 Universit\`a di Padova, 
    via Gradenigo 6B, 35131 Padova, Italy, e-mail:  \texttt{giulia.depasquale@phd.unipd.it, meme@dei.unipd.it}.}
   } 
 \maketitle

\begin{abstract}                          

 In this paper we propose a mathematical model for  a Transactive Memory System (TMS) involved in  the cooperative process of learning a task. The model is based on an intertwined dynamics involving both the individuals level of expertise and the interaction network among the cooperators. The model shows that if all the agents are non-stubborn, then all of them are able to  acquire the competence of the most expert members of the group, asymptotically reaching their level of proficiency. Conversely, when dealing with all stubborn agents, the capability to   pass on   the task depends on the connectedness properties of the interaction graph.
  \end{abstract}

\section{Introduction}
Nowadays, innovation and technological development strongly rely on the cooperation of individuals with different skills and from different majors. Consequently, 
mutual interactions among collaborative agents involved in the accomplishment of a single or of multiple tasks, together with their performance evaluation, have attracted the 
interest of researchers in many fields such as sociology \cite{ sociological_model,sociology1}, psychology \cite{Heider} and management \cite{sethi}. As a matter of fact, when dealing with complex and multidisciplinary duties that need to be efficiently accomplished,  it is important to consider both the 
actual skills of the 
team members and the interpersonal relationships among them \cite{TMS}.  In this regard, different \textit{group mind }theories have been formulated along the years, such as the Transactive Memory System (TMS) theory. It inspired this work and it deeply draws on the analogy between the mental operations of the individual and the process of the group  \cite{TMS}.

The literature on mathematical models aimed at formalizing the team working dynamics  at a macro-level, as a result of the micro-mechanisms occurring among the team members, is rather sparse. Oftentimes, the evolution of labor division and  of the mutual appraisal of  team member skills is treated in the models as   black box \cite{collectivelearning_TAC}. Consequently, the mathematical formalization of how the interpersonal relationships among the team members and their expertise evaluations affect the task accomplishment is still at an early stage  \cite{assign_and_appraise}. 

In this work we propose a mathematical model for the evolution of mutual appraisal and individual expertise of agents belonging to a collaborative team in charge of completing a specific task. We will focus on how the 
opinions that the members have about the abilities of their team mates
evolve along time, starting from an initial, possibly prejudicial, evaluation and updating it based on their effective 
skill levels in the task accomplishment.

As concerns     opinion dynamics  in a group of individuals, the amount of literature on which one can rely upon is quite broad. Most of the works in this context are related to topological characterizations of social networks  \cite{Altafini2013,DeGroot,Heider}.
 However, to the best of our knowledge, the influence of actual skills on the workload distribution within a team has only been addressed in 
  \cite{BulloML}, and no mathematical model describing how  the agents skills evolve, as a result of the perceived  performance of their team mates, has been proposed.

In Huang et al. \cite{assign_and_appraise}   a novel dynamical model for collaborative agents devising a task, with the objective of maximizing their performance, is proposed. The model describes a decentralized process by which each team member dynamically learns the abilities of its mates and consequently agrees with them on how to reorganize the task accomplishment. In Mei et al. \cite{collectivelearning_TAC} a dynamical model for the learning process by which cooperative agents carry on multiple tasks, sequentially executed,  is proposed. Different levels of complexity, starting from a centralized manager-based assignment process up to a completely distributed organization, are discussed. Finally, in Askarisichani et al. \cite{BulloML} the dynamics of small groups of individuals executing a sequence of collective tasks is studied. They observe that individuals with higher expertise are accorded  higher interpersonal influence, while low-performing agents are prone to underestimate their cooperators performance.
The objective of this paper   is to    replicate the dynamics of a group of members that  play out the TMS philosophy, according to which ``gaining entry to the group's stored knowledge is likely to be an efficient enterprise, even when one begins with a fairly inexpert member" \cite{TMS}. To do so we  study how the appraisal matrix of the network evolves over time as a convex combination of prejudices and effective expertise of the group members, and at the same time how the agents of the cooperative network improve in the task accomplishment, as a result of their interactions with more expert agents.  This is done in accordance with the theory stated in \cite{TMS}, in which it is assumed that each individual learns from those with a higher  level of expertise, even when the process  starts biased by some level of prejudice/reputation. In a TMS system, in fact, individual   performance  is related both to   their \textit{personal expertise} and to the \textit{circumstantial knowledge responsibility} that each member has towards the group. Our mathematical model shows that, under suitable   assumptions, each individual is able, through interactions with the other members,  to take advantage of the knowledge in the group and to gain from the others' levels of expertise in a way such that all members reach the same level of proficiency of the most expert individual in the group.

{\bf Notation}.\ ${\mathbb R}_+$ denotes the set of nonnegative real numbers.
 We let   ${\bf e}_i$ denote the $i$-th vector of the canonical basis of $\mathbb{R}^N$.
  The vectors ${\bf 1}_N$ and ${\bf 0}_N$ denote the $N$-dimensional vectors whose entries are all $1$ or $0$, respectively.
In the sequel, the $(i,j)$-th entry of a matrix  $M$ is denoted    by ${m}_{ij}$, while the $i$-th entry of a vector ${\bf v}$   by $v_i$.
Given two vectors ${\bf v}$ and ${\bf w}$ of the same size $N$, the expression
$\max \{{\bf v},{\bf w}\}$ denotes the $N$-dimensional vector ${\bf z}$ with
$z_i = \max \{v_i,w_i\}, i\in \{1,2,\dots, N\}.$
A matrix  $M$, in particular, a vector ${\bf v}$, is {\em nonnegative} ({\em positive}) if all its entries are nonnegative (positive). When so
 we adopt the notation $M \ge  0$ and ${\bf v}\ge 0$ ($M \gg 0$ and ${\bf v}\gg 0$).
A matrix $M\in {\mathbb R}_+^{N\times N}$, $N \ge 2$, is {\em irreducible} if there exists no permutation matrix $P\in {\mathbb R}_+^{N\times N}$ such that  $P^\top M P$ is block triangular,
otherwise it is called {\em reducible}.
Every reducible square matrix $M\in {\mathbb R}_+^{N\times N}$ can be brought, by means of a permutation matrix $P$, to {\em Frobenius form}, i.e.,
\be
P^\top M P = \begin{bmatrix}
M_{11} & 0 & \dots & 0 \cr
M_{21} & M_{22}& \dots & 0\cr
\vdots  & \vdots& \ddots & \vdots\cr
M_{k1} & M_{k2} & \dots & M_{kk}\end{bmatrix},
\label{Frobenius}
\ee
where each diagonal block $M_{ii}$ is either scalar or an irreducible matrix.
 A   matrix $M\in {\mathbb R}_+^{N\times N}$  is {\em row stochastic} if  
is a  nonnegative matrix
and $M{\bf 1}_N = {\bf 1}_N$. 
Given a matrix $M\in {\mathbb R}^{N\times N}$, the {\em spectrum} of $M$, $\sigma(M)$, is the set of eigenvalues of $M$.
We define the {\em spectral radius} of $M$ as $\rho(M) := \max\{|\lambda|: \lambda \in \sigma(M)\}$.\\
Given a vector ${\bf v}\in {\mathbb R}^N$, we define $\|{\bf v}\|_1 := \sum_{i=1}^N |v_i|$ and $\|{\bf v}\|_2 := \sqrt{{\bf v}^\top {\bf v}}$. 
 A {\em directed  graph} is a triple~\cite{Mohar} $\mathcal{G}=(\mathcal{V},\mathcal{E},{\mathcal A})$, where $\mathcal{V}= \{1,2,\dots, N\}$ is the set of vertices
 (nodes), $\mathcal{E}\subseteq\mathcal{V}\times\mathcal{V}$  the set of arcs (edges), and
${\mathcal A}=[a_{ij}]\in {\mathbb R}_+^{N\times N}$ the  {\em adjacency matrix} of the  graph $\mathcal{G}$. 
An arc $(j,i)$  belongs to ${\mathcal E}$ if and only if $a_{ij}
\ne 0$ and when so it has weight {\color{black}$a_{ij}>0$}.
 A sequence 
 ${j_1}
\rightarrow {j_2} \rightarrow  {j_3}  \rightarrow \dots  \rightarrow {j_{k}} \rightarrow {j_{k+1}}$
is a {\em path}  
 of length $k$ 
  from ${j_1}$ to ${j_{k+1}}$
provided that
$({j_1},{j_2}), ({j_2},{j_3}),\dots,$ $({j_{k}}, {j_{k+1}}) \in {\mathcal E}$. 

We will oftentimes use the notation   ${\mathcal D}({\mathcal A})$  to denote the directed graph having ${\mathcal A}\in {\mathbb R}_+^{N\times N}$ as adjacency matrix. 
A directed graph is said to be {\em strongly connected} if for every pair of vertices $i$ and $j$ there is a path from $i$ to $j$ and from $j$ to $i$.
${\mathcal D}({\mathcal A})$ is strongly connected if and only if ${\mathcal A}$ is irreducible.
If the adjacency  matrix ${\mathcal A}$ is in Frobenius form \eqref{Frobenius} with $k$ diagonal blocks ${\mathcal A}_{ii}$ that are either scalar or irreducible matrices, then we can partition the set of vertices ${\mathcal V}=\{1,2,\dots,N\}$ into $k$ 
{\em communication classes} ${\mathcal C}_i, i\in \{1,2,\dots,k\},$ where ${\mathcal C}_i$ is the set of nodes $\{(\sum_{h=0}^{i-1} n_h) +1, (\sum_{h=0}^{i-1} n_h) +2, \dots, (\sum_{h=0}^{i} n_h) \}$ (with $n_0 :=0$ and $n_i := |{\mathcal C}_i|$ for $i\in \{1,2,\dots, k\}$) corresponding to the  row/column indices of the  (entries of the) diagonal block ${\mathcal A}_{ii}$.
For $j <i$ the class ${\mathcal C}_i$ is {\em accessible} from the class ${\mathcal C}_j$  (for short, ${\mathcal C}_j \to {\mathcal C}_i$) if there is a path from some node of 
${\mathcal C}_j$  to some node of ${\mathcal C}_i$. Clearly, ${\mathcal C}_i$ is  accessible from  itself, while for $j >i$ the class ${\mathcal C}_i$ is never accessible from   ${\mathcal C}_j$.

\section{The  Cooperative Learning Model}

We consider a team consisting of $N$ agents who need to collectively perform a task, {\color{black}e.g., paving a floor, laying some tiles, etc. The team needs to perform this task daily,  but team members  exhibit   different levels of expertise and hence different level of involvement in the overall task execution.}
We denote by $y_i(t)$ the expertise level of the agent $i$ at the time {\color{black} (day)} $t$, $i \in \{1,2,\dots, N\}$, compared 
to the  expertise level required to perfectly perform the job. This amounts to saying that at each time $t$, $y_i(t)\in [0,1]$.
For every $i,j \in \{1,\dots, N\}$, we let $m_{ij}(t)$ denote the 
percentage of the task that agent $i$ believes should be attributed to agent $j$. This can be viewed as the
level of competence that agent $i$ attributes to   agent $j$  at the time  $t$ compared to the others. 
These two entities dynamically evolve, in an intertwined way, as follows
\begin{align}
\!\!\!m_{ij}(t+1) &= (1-\lambda_i) m_{ij}(t) + \lambda_i \frac{y_j(t)}{\sum_{i=1}^N y_i(t)}\label{mij}\\
\!\!\!y_i(t+1) &= y_i(t) + \ell_i \cdot \max  \big\{\sum_{k=1}^N m_{ik}(t)y_k(t)-y_i(t), 0\big\}  \label{yi}
\end{align}
where $\lambda_i \in [0,1]$ is a  coefficient that weights how much the  opinion of the agent $i$ about the competence level of the other agents is anchored to its initial evaluation (prejudice). 
In the general case, $m_{ij}$ at time $t+1$  is the convex combination of its   value at time $t$ and of the   expertise level of agent $j$ at time $t$ compared with the total expertise of the team members.
For $\lambda_i = 0$, agent $i$ totally ignores the relative level of experience of the other agents within the overall team and fully relies on its prejudice, while for $\lambda_i = 1$ agent $i$ has an objective evaluation of the expertise level of each team member, and is devoid of prejudices. 
As far as equation \eqref{yi}  is concerned,  note that
 if $y_i$ is higher  than its perceived average expertise level of the team, no action is taken. If it is lower,  the agent tries to increase its expertise level 
to such  a perceived average level, compatibly with its learning capabilities. Indeed, 
the parameter $\ell_i$ represents the \textit{learning coefficient} of agent $i$, and it belongs to the interval $(0,1]$.

Equations \eqref{mij}-\eqref{yi} can be expressed in matrix form as
\begin{align}
M(t+1) &= (I_N-\Lambda)M(t) + \Lambda \frac{\mathbf{1}_N {\bf y}^\top(t)}{\lVert {\bf y}(t) \lVert_1} \label{m},\\
{\bf y}(t+1) &= {\bf y}(t) + L \cdot \max  \{ M(t){\bf y}(t)-{\bf y}(t), 0\}, \label{y}
\end{align}
 where $M(t) \in [0,1]^{N \times N}$ is the matrix whose $(i,j)$-th entry is $m_{ij}(t)$ and ${\bf y}(t)\in [0,1]^{N}$ is the vector  {\color{black} whose $i$-th entry is} $y_i(t)$. The $\max $ function, in equation \eqref{y}, acts component-wise, as clarified in the Notation. 
The diagonal matrices $L =[\ell_i]$ and $\Lambda = [\lambda_i]$, $i \in \{1,\dots, N\}$, are the $N \times N$ diagonal matrices whose $i$-th diagonal entries correspond to the learning coefficient and the perception coefficient, respectively, of the $i$-th agent.  \\  
   Equation \eqref{m} is very similar to the ``differentiation model" (D model) proposed in \cite{BulloML}. 
  In  \cite{BulloML}  the influence matrix $M$ at  time $t+1$ is assumed to be the result of a scalar convex combination of the matrix itself at   time   $t$ and of a function of the vector ${\bf y}(t)$ (in other words, $\lambda_i=\lambda$ for every agent $i$, and the parameter $\lambda$  has a different interpretation, since it is employed in the time-scale   adjustment of the dynamics). On the other hand,  \eqref{y} proposed  in this work is a dynamical model that describes the time evolution of the expertise vector, while in \cite{BulloML} expertise is   estimated from data.

 This work is developed under the following:

\textbf{Assumption 1} [\textit{Initial Conditions}].  Each agent's initial expertise takes values in $(0,1]$, namely ${\bf y}(0) \in (0,1]^{N}$, and the matrix $M(0)\in \mathbb{R}^{N \times N}$ is   row stochastic.
\smallskip

The   assumption that the vector ${\bf y}(t)$ is strictly positive excludes the (unrealistic) presence of totally inexpert individuals in the  {\color{black} team} (corresponding to $y_i(0) =0$).
On the other hand, the interpretation of the entries of the matrix $M(t)$ naturally corresponds to the fact that $M(t)$ is row stochastic at every time $t$, and hence in particular at $t=0$.

The following proposition shows that, under Assumption 1,  the proposed model is well posed.

\begin{proposition} 

If Assumption 1 holds, then
\begin{itemize}
\item[i)]   $M(t) \in   \mathbb{R}^{N \times N}$ is row stochastic, $\forall t \ge 0$,
\item[ii)]  ${\bf y}(t)\in (0,1]^N$, $\forall t \geq 0$.
\end{itemize}
\end{proposition}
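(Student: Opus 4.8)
The plan is to prove both claims simultaneously by induction on $t$, since the two dynamics \eqref{m}--\eqref{y} are coupled: the update of ${\bf y}(t+1)$ uses $M(t)$, and the update of $M(t+1)$ uses ${\bf y}(t)$. The base case $t=0$ is exactly Assumption 1. So I would fix $t\ge 0$, assume as inductive hypothesis that $M(t)$ is row stochastic and ${\bf y}(t)\in(0,1]^N$, and establish the two properties for $t+1$.

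For part i) I would verify the two defining conditions of row stochasticity separately. Nonnegativity of $M(t+1)$ follows entrywise: $I_N-\Lambda$ is diagonal with entries $1-\lambda_i\ge 0$ and $M(t)\ge 0$, so $(I_N-\Lambda)M(t)\ge 0$; the second summand is nonnegative because $\Lambda\ge 0$, ${\bf y}(t)\gg 0$, and $\|{\bf y}(t)\|_1>0$. For the row-sum condition I would right-multiply \eqref{m} by ${\bf 1}_N$ and use $M(t){\bf 1}_N={\bf 1}_N$ (inductive hypothesis) together with the identity ${\bf y}^\top(t){\bf 1}_N=\sum_i y_i(t)=\|{\bf y}(t)\|_1$, valid because ${\bf y}(t)\gg 0$. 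The factor $\Lambda$ then multiplies ${\bf 1}_N\|{\bf y}(t)\|_1/\|{\bf y}(t)\|_1={\bf 1}_N$, so the two summands collapse to $(I_N-\Lambda){\bf 1}_N+\Lambda{\bf 1}_N={\bf 1}_N$.

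For part ii) the lower bound is immediate: the $\max$ term in \eqref{y} is nonnegative by construction, so $y_i(t+1)\ge y_i(t)>0$ for every $i$. The upper bound is the only delicate point, and I would argue componentwise by splitting on the sign of $(M(t){\bf y}(t))_i-y_i(t)$. If it is nonpositive, the $\max$ vanishes and $y_i(t+1)=y_i(t)\le 1$. Otherwise $y_i(t+1)=(1-\ell_i)y_i(t)+\ell_i(M(t){\bf y}(t))_i$ is a convex combination (since $\ell_i\in(0,1]$) of $y_i(t)$ and $(M(t){\bf y}(t))_i$. Here the key observation is that row stochasticity of $M(t)$ makes each $(M(t){\bf y}(t))_i=\sum_k m_{ik}(t)y_k(t)$ a convex combination of the entries of ${\bf y}(t)$, hence bounded above by $\max_k y_k(t)\le 1$; combined with $y_i(t)\le 1$ this forces $y_i(t+1)\le 1$.

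The main obstacle, such as it is, lies entirely in the upper bound of part ii): it is the one place where both halves of the inductive hypothesis are needed at once---row stochasticity of $M(t)$ to control $(M(t){\bf y}(t))_i$ and the bound ${\bf y}(t)\le{\bf 1}_N$---and where the componentwise $\max$ must be unpacked by cases. Everything else is a direct verification propagated through the induction.
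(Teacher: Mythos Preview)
Your proposal is correct and follows essentially the same approach as the paper. The only organizational difference is that the paper first isolates the observation ${\bf y}(t)\gg 0$ for all $t$ (which suffices, by itself, to run the induction for part i)), and then proves ii) using i); you instead bundle both claims into a single joint induction, but the actual computations---the row-sum identity for $M(t+1){\bf 1}_N$ and the convex-combination argument for the upper bound on $y_i(t+1)$---are identical.
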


\begin{proof} {\color{black}We first observe that equation \eqref{y} guarantees that, under Assumption 1, ${\bf y}(t)\gg 0, \forall t\ge 0$}. \\ 
i) We proceed by induction on $t\ge 0$.
$M(0)$ is row stochastic by assumption. 
We now show that  if $M(t)$ is row stochastic, then $M(t+1)$ is row stochastic as well.  
{\color{black}First of all, for every index $i$, ${\bf e}_i^\top M(t)$ and ${\bf y}^\top (t)$ are nonnegative vectors and neither of them can be zero. Therefore ${\bf e}_i^\top M(t+1)$ is in turn a nonnegative nonzero vector. Also,}
by  equation \eqref{m}, we have 
\begin{align*}
M(t+1)\mathbf{1}_N  &= (I_N-\Lambda)M(t)\mathbf{1}_N  + \Lambda \frac{\mathbf{1}_N y^\top(t)}{ \lVert {\bf y}(t) \lVert_1}\mathbf{1}_N \\
&= (I_N-\Lambda)\mathbf{1}_N +\Lambda \mathbf{1}_N = \mathbf{1}_N.
\end{align*}

\noindent ii)\
Also in this case, we proceed by induction on $t\ge 0$, by making use of Assumption 1 and of the previous part i).
By Assumption 1, the result is  true for $t=0$. Now, let us assume that ${\bf y}(t)\in (0,1]^N$ and $M(t)$ is row stochastic, and show that ${\bf y}(t+1)\in (0,1]^N$.
From  equation \eqref{yi} one gets that if $i\in \{1,\dots,N\}$ is such that $\sum_{k = 1}^N m_{ik}(t)y_k(t)-y_i(t) \le 0$, then  $y_i(t+1) = y_i(t)\in (0,1]$, while
if  $i\in [1,N]$ is such that $\sum_{k = 1}^N m_{ik}(t)y_k(t)-y_i(t) > 0$, one has
 \begin{align}
y_i(t+1) 
 (1-\ell_i)y_i(t) + \ell_i \sum_{k = 1}^N m_{ik}(t)y_k(t). \label{wellpo}
\end{align}
Since $\ell_i\in(0,1]$ and $\sum_{k = 1}^N m_{ik}(t)=1$, the last expression in \eqref{wellpo} is a convex combination of   two terms, $y_i(t)$ and $\sum_{k = 1}^N m_{ik}(t)y_k(t)$, both of them belonging to   $(0,1]$, by the inductive assumption, and hence it belongs  to $(0,1]$, too. 
$\square$
\end{proof}

 Lemma \ref{monotone}, below, gives some insight on how the components of the vector ${\bf y}(t)$ in \eqref{y} evolve over time. In particular, it shows that all its  entries exhibit   a non-decreasing trend, and their maximum value at $t=0$ remains stationary.

\begin{lemma}\label{monotone}

Under Assumption 1, for every $t\ge 0$ we have 
\begin{itemize}
\item[i)] $y_i(t+1) \geq y_i(t)\geq y_i(0)$, for every $i\in \{1,2,\dots, N\}$;
\item[ii)] $\max _k y_k(t+1) = \max _ky_k(t) = \max _ky_k(0)$.
\end{itemize}
\end{lemma}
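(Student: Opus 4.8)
The plan is to treat the two parts separately: part i) is essentially immediate from the structure of the update \eqref{yi}, while part ii) requires a two-sided bounding argument whose upper bound crucially rests on the row stochasticity of $M(t)$ established in the Proposition.

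For part i), I would read off directly from \eqref{yi} that the increment $y_i(t+1)-y_i(t)=\ell_i\,\max\{\sum_{k=1}^N m_{ik}(t)y_k(t)-y_i(t),\,0\}$ is the product of the strictly positive learning coefficient $\ell_i\in(0,1]$ with a quantity that is nonnegative by the very definition of the $\max$. Hence $y_i(t+1)\geq y_i(t)$ for every $i$ and every $t\geq 0$, and chaining this inequality down to the initial time (equivalently, a one-line induction on $t$) yields $y_i(t)\geq y_i(0)$, which closes part i).

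For part ii), the inequality $\max_k y_k(t+1)\geq \max_k y_k(t)$ is an immediate consequence of part i): taking an index $j$ that attains $\max_k y_k(t)$, monotonicity gives $y_j(t+1)\geq y_j(t)=\max_k y_k(t)$, and the left-hand maximum dominates $y_j(t+1)$. The reverse inequality $\max_k y_k(t+1)\leq \max_k y_k(t)$ is the heart of the argument. I would fix an arbitrary $i$ and split according to the sign of $\sum_{k=1}^N m_{ik}(t)y_k(t)-y_i(t)$. If it is nonpositive, then $y_i(t+1)=y_i(t)\leq \max_k y_k(t)$ trivially. If it is positive, expression \eqref{wellpo} writes $y_i(t+1)$ as the convex combination $(1-\ell_i)y_i(t)+\ell_i\sum_{k=1}^N m_{ik}(t)y_k(t)$; here I invoke that $M(t)$ is row stochastic (Proposition, part i)), so $\sum_{k=1}^N m_{ik}(t)y_k(t)$ is itself a convex average of the entries $y_k(t)$ and is therefore bounded above by $\max_k y_k(t)$. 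Both terms of the convex combination being at most $\max_k y_k(t)$, so is $y_i(t+1)$. Maximizing over $i$ gives the desired upper bound.

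Combining the two inequalities yields $\max_k y_k(t+1)=\max_k y_k(t)$ for every $t\geq 0$, and a final induction propagates this equality back to $\max_k y_k(0)$. The step I expect to be the main obstacle, though technically short, is precisely the upper bound in the positive case: it relies entirely on row stochasticity making $\sum_{k=1}^N m_{ik}(t)y_k(t)$ a genuine convex average of the current expertise levels, so that the learning update can never push any agent strictly above the incumbent maximum.
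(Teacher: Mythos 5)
Your proposal is correct and follows essentially the same route as the paper: part i) from the nonnegativity of the $\ell_i\max\{\cdot,0\}$ increment, and the upper bound in part ii) from the convex-combination form \eqref{wellpo} together with row stochasticity of $M(t)$. The only cosmetic difference is that the paper obtains $\max_k y_k(t+1)\geq\max_k y_k(t)$ by observing that the maximizing agent satisfies $y_i(t)\geq\sum_k m_{ik}(t)y_k(t)$ and hence does not update, whereas you deduce it directly from the monotonicity in part i); both are valid and equivalent in substance.
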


\begin{proof}  From equation \eqref{yi} it clearly follows that, $\forall i \in \{1, \dots, N\}$, 
   if  $y_i(t) \geq \sum_{k=1}^Nm_{ik}(t)y_k(t)$ then $y_i(t+1) = y_i(t)$, otherwise 
$y_i(t+1) > y_i(t).$ This proves i).
 
On the other hand in this latter case, namely when  $y_i(t) < \sum_{k=1}^Nm_{ik}(t)y_k(t)$, then
\begin{align*}
&y_i(t+1) = y_i(t)+ \ell_i \Big(\sum_{k=1}^Nm_{ik}(t)y_k(t)-y_i(t) \Big) \\
&{\color{black}\leq} 
 (1-\ell_i)y_i(t)+\ell_i(\max _k y_k(t))\leq\max _k y_k(t).
\end{align*}
Therefore,  in particular, $\max _k y_k(t+1) \leq \max _k y_k(t)$. 
Finally,  if the agent $i$ is such that $y_i(t)=\max _k y_k(t)$ then, clearly $y_i(t) \ge \sum_{k=1}^Nm_{ik}(t)y_k(t)$ and hence   $y_i(t+1) = y_i(t) = \max _k y_k(t)$.
Therefore 
statement ii) holds true.
$\square$
\end{proof}

The main contribution   of Lemma \ref{monotone} is to prove that, according to model \eqref{yi}, every agent over time can only increase its expertise or, in the worst case, leave it unaltered. On the other hand, model \eqref{yi} also formalizes the fact 
  that in a closed team no agent  can reach   an expertise level higher than the highest expertise level that one of its members exhibited at the initial time.

\section{Equilibria and asymptotic behavior when none of  the agents is   stubborn}

Model \eqref{m}-\eqref{y} can evolve in quite different ways and  asymptotically reach very diverse configurations, depending 
on the mutual attitude of the team workers. The crucial parameters are the coefficients $\lambda_i$'s that express how much agent $i$ is   open minded
and  updates its evaluations of the team mates based on their actual skills.
Three possible scenarios can be considered: the case when none of the agents is stubborn 
($\lambda_i\ne 0$ for every $i$), the case when they are all stubborn ($\Lambda =0$), and the intermediate case when only a subgroup of the agents is stubborn.
Due to space constraints, in the paper we will address only the first two opposite situations.\\
Specifically, in this section we investigate the structure of the equilibrium points of the model \eqref{m}-\eqref{y} 
under the following:

\textbf{Assumption 2} [{\it No agent is  stubborn}]\ The matrix $\Lambda$   appearing in equation \eqref{m} has diagonal entries   $\lambda_i \in (0,1]$, $\forall i \in \{1,\dots,N\}$.

 \begin{proposition}\label{equilibria}
Under Assumptions 1 and 2, a pair $(\bar M, \bar {\bf y})$, with $\bar M\in {\mathbb R}^{N\times N}$ and $\bar {\bf y}\in (0,1]^N$,  is an equilibrium point of the model \eqref{m}-\eqref{y}
if and only if 
\be 
\bar M= \frac{1}{N} {\bf 1}_N {\bf 1}_N^\top,
\qquad {\rm and}\qquad
\bar {\bf y} = \alpha {\bf 1}_N,
\label{equ_bar1}
\ee
for some $\alpha\in (0,1]$.
\end{proposition}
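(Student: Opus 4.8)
The plan is to exploit the invertibility of the diagonal matrices $\Lambda$ and $L$. Under Assumption 2 the entries $\lambda_i$ lie in $(0,1]$, and by hypothesis $\ell_i \in (0,1]$ as well, so both $\Lambda$ and $L$ have strictly positive diagonals and are nonsingular. Setting $M(t+1)=M(t)=\bar M$ in \eqref{m} gives $\Lambda \bar M = \Lambda \frac{{\bf 1}_N \bar{\bf y}^\top}{\| \bar{\bf y}\|_1}$; cancelling $\Lambda$ yields the rank-one identity $\bar M = \frac{{\bf 1}_N \bar{\bf y}^\top}{\| \bar{\bf y}\|_1}$, so that every row of $\bar M$ equals the normalized row vector $\bar{\bf y}^\top / \| \bar{\bf y}\|_1$. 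Likewise, setting ${\bf y}(t+1)={\bf y}(t)=\bar{\bf y}$ in \eqref{y} and cancelling the nonsingular $L$ forces $\max\{\bar M \bar{\bf y}-\bar{\bf y},0\}={\bf 0}_N$, i.e., the componentwise inequality $\bar M \bar{\bf y}\le \bar{\bf y}$.

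Next I would evaluate $\bar M\bar{\bf y}$ using the rank-one form: since $\bar{\bf y}^\top \bar{\bf y}=\|\bar{\bf y}\|_2^2$, one has $\bar M\bar{\bf y}=\frac{\|\bar{\bf y}\|_2^2}{\|\bar{\bf y}\|_1}{\bf 1}_N$, a constant vector whose common entry I call $c:=\|\bar{\bf y}\|_2^2/\|\bar{\bf y}\|_1$. The equilibrium inequality then reads $c\le \bar y_i$ for every $i$, equivalently $c\le \min_i \bar y_i$.

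The crux is to upgrade this to $\bar{\bf y}=\alpha {\bf 1}_N$. The key observation is that $c$ is itself a convex combination of the entries of $\bar{\bf y}$, namely $c=\sum_{i=1}^N \frac{\bar y_i}{\|\bar{\bf y}\|_1}\,\bar y_i$, whose weights $\bar y_i/\|\bar{\bf y}\|_1$ are strictly positive and sum to one because $\bar{\bf y}\in(0,1]^N$ is strictly positive. Any convex combination with positive weights satisfies $c\ge \min_i \bar y_i$, with equality only if all entries coincide with $\min_i \bar y_i$. Combining with $c\le \min_i \bar y_i$ from the previous step gives $c=\min_i \bar y_i$, and hence $\bar y_i=\alpha$ for all $i$, where $\alpha:=\min_j \bar y_j\in(0,1]$; that is, $\bar{\bf y}=\alpha {\bf 1}_N$. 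Substituting back into the rank-one expression, $\bar M=\frac{{\bf 1}_N(\alpha{\bf 1}_N)^\top}{N\alpha}=\frac1N {\bf 1}_N {\bf 1}_N^\top$, which is \eqref{equ_bar1}.

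For the converse I would substitute $\bar M=\frac1N{\bf 1}_N{\bf 1}_N^\top$ and $\bar{\bf y}=\alpha{\bf 1}_N$ directly into \eqref{m}--\eqref{y}: the normalized outer product reproduces $\bar M$, while $\bar M\bar{\bf y}=\alpha{\bf 1}_N=\bar{\bf y}$ annihilates the $\max$ term, so the pair is indeed fixed. The only genuine obstacle is the convex-combination equality argument of the third paragraph; the remaining steps are direct algebra made possible by the invertibility of $\Lambda$ and $L$.
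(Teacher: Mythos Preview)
Your proof is correct and follows essentially the same route as the paper: invertibility of $\Lambda$ and $L$ reduces the equilibrium conditions to $\bar M=\frac{{\bf 1}_N\bar{\bf y}^\top}{\|\bar{\bf y}\|_1}$ and $\bar M\bar{\bf y}\le\bar{\bf y}$, substitution gives $\frac{\|\bar{\bf y}\|_2^2}{\|\bar{\bf y}\|_1}\le\bar y_i$ for all $i$, and one argues that this forces all entries of $\bar{\bf y}$ to coincide. The only cosmetic difference is in this last step: the paper argues by contradiction (if $y_1=\min_j y_j<y_k$ for some $k$, the inequality at $i=1$ rearranges to $y_1(y_2+\dots+y_N)\ge y_2^2+\dots+y_N^2$, contradicting $y_1y_j\le y_j^2$ with strict inequality at $j=k$), whereas you phrase the same fact as ``a convex combination with strictly positive weights attains the minimum only if all terms are equal''; both are the same observation.
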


\begin{proof}
A pair $(\bar M, \bar {\bf y})$ is an equilibrium point of the model \eqref{m}-\eqref{y}
if and only if
\begin{align*}
\bar M &= (I_N-\Lambda)\bar M + \Lambda \frac{\mathbf{1}_N \bar {\bf y}^\top}{\lVert \bar {\bf y} \lVert_1},\\
\bar {\bf y} &= \bar {\bf y} + L \cdot \max  \{ \bar M \bar {\bf y}-\bar {\bf y}, 0\},
\end{align*}
which is equivalent (by the nonsingularity of   $\Lambda$ and $L$) to
\begin{align}
&\bar M = \frac{\mathbf{1}_N \bar {\bf y}^\top}{\lVert \bar {\bf y} \lVert_1} \label{m1},\\
& \bar M \bar {\bf y}-\bar {\bf y} \le 0. \label{y1}
\end{align}
It is immediate to see that if $\bar {\bf y}$ and $\bar M$ take the expression in \eqref{equ_bar1} then the previous identities hold.\\
Conversely,   suppose that the pair $(\bar M,\bar {\bf y})$, with $\bar M\in {\mathbb R}^{N\times N}$ and $\bar {\bf y}\in (0,1]^N$,  is an equilibrium point and hence satisfies
\eqref{m1}-\eqref{y1}.
Replace \eqref{m1} in \eqref{y1} thus getting
$$
\frac{\mathbf{1}_N \|\bar {\bf y}\|_2^2}{\lVert \bar {\bf y} \lVert_1}   = \frac{\mathbf{1}_N \bar {\bf y}^\top}{\lVert \bar {\bf y} \lVert_1} \bar {\bf y} \le \bar {\bf y},
$$
or, componentwise,
\begin{equation}
 y_i \geq \frac{\sum_{i=1}^N y_i^2}{\sum_{i=1}^N y_i}, \qquad \forall i\in \{1,\dots, N\}.
 \label{per_comp1}
\end{equation}
We will now show that if the previous inequalities hold, then $\bar {\bf y} = \alpha {\bf 1}_N$, $\exists \alpha \in (0,1]$.
To do so, we proceed by contradiction. Let us assume that the components of the vector $\bar {\bf y}$ are not all   identical. Without loss of generality, let us assume $y_1 \leq y_j$, $\forall j \neq 1$, and that there exists $k \in \{2, \dots, N\}$ such that $y_1 < y_k$. We have that\eqref{per_comp1} implies
$y_1 (y_2+ \dots + y_N) \geq y_2^2 + \dots +y_N^2 $, but from the fact that $y_1y_j \leq y_j^2$, $\forall j \neq 1$ and that $y_1y_k < y_k^2$ we get 
$$y_1(y_2+\dots + y_N)<y_2^2+ \dots + y_k^2+ \dots + y_N^2,$$ 
thus leading to a contradiction. 
Therefore all entries of $\bar {\bf y}$ must be identical.
So, if $\bar {\bf y} = \alpha {\bf 1}_N$, then
$$ \bar M = \frac{\mathbf{1}_N \bar {\bf y}^\top}{\lVert \bar {\bf y} \lVert_1} = \frac{1}{N} {\bf 1}_N {\bf 1}_N^\top,$$
and this completes the proof.
$\square$
\end{proof}

\begin{remark}
Note that the   $M(0)$ is not necessarily irreducible. However, as a consequence of Assumption 1 and Assumption 2, 
$M(t)$ becomes irreducible {\color{black}from $t=1$ onward.}
\end{remark}

We now show that  
 in this scenario, namely under Assumption 2,
for every choice of the initial conditions satisfying Assumption 1,  
all the expertise levels of the agents in the team asymptotically converge to the same value.

 \begin{proposition}\label{y_convergence}
Under Assumptions 1 and 2, the vector sequence ${\bf y}(t),  t\in {\mathbb Z}_+,$  asymptotically converges to 
$\alpha {\bf 1}_N$, where $\alpha = \max _i y_i(0) \in \mathbb{R}_+$. 

\end{proposition}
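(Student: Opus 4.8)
The plan is to exploit the two lemmas already in hand together with the equilibrium characterization of Proposition \ref{equilibria}. The key observation is that convergence of $\mathbf{y}(t)$ comes essentially for free from monotonicity, so the whole problem reduces to identifying the limit as the correct consensus vector.

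First I would establish that $\mathbf{y}(t)$ converges. By Lemma \ref{monotone} i), each coordinate $y_i(t)$ is non-decreasing in $t$ and bounded above (by $1$, or more sharply by $\alpha := \max_k y_k(0)$). A bounded monotone real sequence converges, so $y_i(t) \to y_i^\ast$ for every $i$, i.e. $\mathbf{y}(t) \to \mathbf{y}^\ast$. Moreover, since $y_i(t) \ge y_i(0) > 0$ by Assumption 1, the limit satisfies $\mathbf{y}^\ast \in (0,1]^N$, and from Lemma \ref{monotone} ii) I would record the crucial fact $\max_i y_i^\ast = \alpha$.

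Next I would show that $M(t)$ also converges. Because $\mathbf{y}^\ast \gg 0$, the normalized vector $\mathbf{y}(t)/\|\mathbf{y}(t)\|_1$ converges to $\mathbf{y}^\ast/\|\mathbf{y}^\ast\|_1$. Reading \eqref{mij} row by row, each entry obeys the scalar recursion $m_{ij}(t+1) = (1-\lambda_i)m_{ij}(t) + \lambda_i u_j(t)$ driven by $u_j(t) = y_j(t)/\|\mathbf{y}(t)\|_1 \to y_j^\ast/\|\mathbf{y}^\ast\|_1$. Under Assumption 2 we have $\lambda_i \in (0,1]$, so the contraction factor $1-\lambda_i$ lies in $[0,1)$; a standard estimate on the error $e(t) = m_{ij}(t) - y_j^\ast/\|\mathbf{y}^\ast\|_1$ then gives $e(t) \to 0$. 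Hence $M(t) \to M^\ast := \mathbf{1}_N (\mathbf{y}^\ast)^\top / \|\mathbf{y}^\ast\|_1$. With both sequences convergent, I would close the argument by continuity: the update map $(\mathbf{y}, M) \mapsto F(\mathbf{y}, M)$ defined by \eqref{m}-\eqref{y} is continuous at $(\mathbf{y}^\ast, M^\ast)$, since the only possible singularity (division by $\|\mathbf{y}\|_1$) is harmless as $\mathbf{y}^\ast \gg 0$, and $\max\{\cdot, 0\}$ is continuous. Passing to the limit in $(\mathbf{y}(t+1), M(t+1)) = F(\mathbf{y}(t), M(t))$ shows that $(\mathbf{y}^\ast, M^\ast)$ is an equilibrium; by Proposition \ref{equilibria} then $\mathbf{y}^\ast = \alpha' \mathbf{1}_N$ for some $\alpha' \in (0,1]$, and comparing with the preserved maximum $\max_i y_i^\ast = \alpha$ forces $\alpha' = \alpha$, giving $\mathbf{y}(t) \to \alpha \mathbf{1}_N$.

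The main obstacle is conceptually modest but worth isolating: the bare convergence of $\mathbf{y}(t)$ yields that $\mathbf{y}^\ast$ is a consensus vector only once paired with the equilibrium analysis, and that pairing requires convergence of $M(t)$ as well, not merely of $\mathbf{y}(t)$. Thus the genuinely load-bearing step is the joint convergence plus the continuity/limit argument, whereas the consensus structure (all entries equal) is inherited directly from Proposition \ref{equilibria} and the exact value $\alpha$ is pinned down solely by the max-preservation in Lemma \ref{monotone} ii). One should double-check that the limit pair indeed satisfies both equilibrium relations \eqref{m1}-\eqref{y1}, which follows from the nonsingularity of $\Lambda$ and $L$ exactly as in the proof of Proposition \ref{equilibria}.
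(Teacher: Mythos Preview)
Your argument is correct, but it follows a genuinely different route from the paper's own proof. The paper does \emph{not} establish convergence of $M(t)$ inside the proof of Proposition~\ref{y_convergence}; instead, after noting (as you do) that monotonicity forces ${\bf y}(t)\to\bar{\bf y}$, it argues directly by contradiction: under Assumptions~1 and~2 every entry of $M(t)$ is strictly positive for $t\ge 1$ (in fact uniformly bounded below, since $m_{ij}(t+1)\ge \lambda_i\, y_j(0)/N$), so if $\bar{\bf y}$ had a strictly minimal entry $\bar y_i<\alpha$ the increment $y_i(t+1)-y_i(t)$ would stay bounded away from zero for large $t$, contradicting convergence. Convergence of $M(t)$ is then proved separately (Proposition~\ref{M_convergence}), \emph{using} Proposition~\ref{y_convergence}.

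Your approach reverses the logical dependence: you first get convergence of $M(t)$ from the already-established convergence of ${\bf y}(t)$ via the stable scalar recursion with convergent forcing, and then invoke Proposition~\ref{equilibria} on the joint limit. This is perfectly valid and arguably more systematic, since it treats the coupled system as a whole and delivers the limit of $M(t)$ for free; the price is that the identification of $\bar{\bf y}$ now rests on Proposition~\ref{equilibria} as a black box, whereas the paper's direct positivity argument is self-contained and keeps Propositions~\ref{y_convergence} and~\ref{M_convergence} decoupled.
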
  

\begin{proof} From Lemma \ref{monotone} we know that the vector sequence $\{ {\bf y}(t)\}_{t\in {\mathbb Z}_+}$ is monotone and upper bounded by the vector 
$\alpha {\bf 1}_N$, where $\alpha = \max _i y_i(0)$, and hence it converges. We want to prove that $\bar {\bf y}:= \lim_{t\to +\infty} {\bf y}(t)$ coincides with $\alpha {\bf 1}_N$.
If ${\bf y}(0)= \alpha {\bf 1}_N$ the result is obvious, since $\alpha {\bf 1}_N$ is an equilibrium point of \eqref{y} for every choice of the row stochastic matrix $M(t)$. 
Suppose, on the contrary, that ${\bf y}(0) \le \bar {\bf y} \le  \alpha {\bf 1}_N$, but  $\bar {\bf y} \ne \alpha {\bf 1}_N$.
This means that there exists $i\in \{1,2,\dots, N\}$ such that $\bar {\bf y}_i< \alpha$. It entails no loss of generality assuming that $i = {\rm argmin}_k \bar {\bf y}_k$. On the other hand, we also know that there exists $j\in \{1,2,\dots, N\}$ such that $\bar {\bf y}_j = \alpha$.\\
Now we observe that  condition ${\bf y}(0) \in (0,1]^N$ from Assumption 1, together with Assumption 2, ensure that $M(t) \gg 0$ for every $t\ge 1$, and hence, in particular, ${\bf e}_i^\top M(t) \gg 0$ for every $t\ge 1$. Therefore 
$\max  \big\{\sum_{k=1}^N m_{ik}(t)\bar {\bf y}_k-\bar {\bf y}_i, 0\big\} = \sum_{k=1}^N m_{ik}(t) \bar {\bf y}_k-\bar {\bf y}_i > 0$ and this shows that if $\bar {\bf y} \ne \alpha {\bf 1}_N$, it cannot be 
the limit of the sequence ${\bf y}(t), t\ge 0$.
$\square$
\end{proof}

In the following we discuss the asymptotic behavior of the matrix $M(t)$ involved in the model dynamics \eqref{m}-\eqref{y}.

\begin{proposition} \label{M_convergence}
Under Assumptions 1 and 2, 
$$M_{\infty} := \lim_{t\rightarrow \infty}M(t) = \frac{{\bf 1}_N{\bf 1}_N^\top}{N}.$$
\end{proposition}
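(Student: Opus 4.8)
The plan is to leverage Proposition~\ref{y_convergence}, which already pins down the limit of ${\bf y}(t)$, and to feed it into the matrix recursion~\eqref{m}, treating the latter as a contractive linear dynamics driven by an asymptotically constant input. The first observation I would make is that~\eqref{m} decouples row by row: writing ${\bf m}_i(t) := {\bf e}_i^\top M(t)$ for the $i$-th row of $M(t)$, equation~\eqref{m} becomes
\be
{\bf m}_i(t+1) = (1-\lambda_i)\, {\bf m}_i(t) + \lambda_i\, {\bf r}(t), \qquad {\bf r}(t) := \frac{{\bf y}^\top(t)}{\|{\bf y}(t)\|_1}.
\ee
By Proposition~\ref{y_convergence}, ${\bf y}(t)\to \alpha {\bf 1}_N$ with $\alpha>0$, so $\|{\bf y}(t)\|_1\to N\alpha>0$ and therefore ${\bf r}(t)\to {\bf r}_\infty := \tfrac1N {\bf 1}_N^\top$. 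The natural candidate limit for each row is exactly ${\bf r}_\infty$, since it is the unique fixed point of the affine map ${\bf m}\mapsto (1-\lambda_i){\bf m}+\lambda_i {\bf r}_\infty$, uniqueness being guaranteed by $\lambda_i\ne 0$ (Assumption~2).

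Next I would track the error ${\bf z}_i(t) := {\bf m}_i(t) - {\bf r}_\infty$. Subtracting the fixed-point identity ${\bf r}_\infty = (1-\lambda_i){\bf r}_\infty + \lambda_i {\bf r}_\infty$ from the row recursion gives
\be
{\bf z}_i(t+1) = (1-\lambda_i)\,{\bf z}_i(t) + \lambda_i\,{\bf d}(t), \qquad {\bf d}(t) := {\bf r}(t) - {\bf r}_\infty,
\ee
where $\|{\bf d}(t)\|_1\to 0$. Taking the $1$-norm and using $1-\lambda_i\in[0,1)$ yields the scalar comparison inequality $\|{\bf z}_i(t+1)\|_1 \le (1-\lambda_i)\|{\bf z}_i(t)\|_1 + \lambda_i\,\|{\bf d}(t)\|_1$.

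The core of the argument, and the step I expect to be the main obstacle, is the passage to the limit: one cannot simply substitute the limiting value inside~\eqref{m}, because ${\bf r}(t)$ is only asymptotically constant, so the forcing term must be handled as a genuine vanishing perturbation. I would carry this out by a standard vanishing-input estimate: given $\varepsilon>0$, pick $T$ with $\|{\bf d}(t)\|_1<\varepsilon$ for all $t\ge T$; the scalar recursion $x(t+1)=(1-\lambda_i)x(t)+\lambda_i\varepsilon$ contracts toward its fixed point $\varepsilon$, whence $\limsup_{t\to\infty}\|{\bf z}_i(t)\|_1\le \varepsilon$, and letting $\varepsilon\to 0$ forces ${\bf z}_i(t)\to 0$, i.e. ${\bf m}_i(t)\to \tfrac1N{\bf 1}_N^\top$. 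What makes this work is precisely the strict contraction $1-\lambda_i<1$ secured by Assumption~2. Equivalently, I could solve the recursion explicitly as ${\bf m}_i(t) = (1-\lambda_i)^t{\bf m}_i(0) + \lambda_i\sum_{s=0}^{t-1}(1-\lambda_i)^{t-1-s}{\bf r}(s)$ and note that the geometric weights concentrate on the most recent (hence near-limit) values of ${\bf r}(s)$ while summing to $1-(1-\lambda_i)^t\to 1$, giving the same conclusion. Since the limit holds for every row $i\in\{1,\dots,N\}$, I would conclude $M(t)\to \tfrac1N{\bf 1}_N{\bf 1}_N^\top$, as claimed.
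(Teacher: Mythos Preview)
Your proof is correct and follows the same high-level strategy as the paper: decouple~\eqref{m} row by row into scalar affine recursions with contraction factor $1-\lambda_i\in[0,1)$, and use Proposition~\ref{y_convergence} to control the forcing term ${\bf r}(t)=\tfrac{{\bf y}^\top(t)}{\|{\bf y}(t)\|_1}$. The execution differs slightly: the paper first proves \emph{existence} of $M_\infty$ by writing the variation-of-constants formula and invoking a separate comparison lemma (Lemma~\ref{innominato}) to show the geometric series $\sum_k (1-\lambda_i)^k u_j(k)$ converges, and only afterwards \emph{identifies} the limit by passing to the limit in~\eqref{m}. You instead merge existence and identification into a single step by tracking the error ${\bf z}_i(t)={\bf m}_i(t)-{\bf r}_\infty$ and running a vanishing-input/ISS-type estimate, which is more streamlined and dispenses with the auxiliary lemma. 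Both arguments ultimately rest on the same two ingredients: the strict contraction $1-\lambda_i<1$ from Assumption~2 and the convergence ${\bf r}(t)\to\tfrac1N{\bf 1}_N^\top$ from Proposition~\ref{y_convergence}.
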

\begin{proof}
 If we define $u_j(t):= \frac{y_j(t)}{\lVert {\bf y}(t)\lVert_1}$, equation \eqref{mij} takes the form
$m_{ij}(t+1) = (1-\lambda_i)m_{ij}(t)+\lambda_iu_j(t)$
from which it follows that 
$$m_{ij}(t) = (1-\lambda_i)^{t}m_{ij}(0)+ \lambda_i \sum_{k=0}^{t-1}(1-\lambda_i)^{t-1-k}u_{j}(k)$$
and 
\begin{align*}
\lim_{t\rightarrow \infty} m_{ij}(t) &= \lim_{t \rightarrow \infty} \Big[ (1-\lambda_i)^t m_{ij}(0) + \lambda_i \sum_{k=0}^{t-1} (1-\lambda_i)^k u_j(k)\Big] \\
&= \lambda_i \sum_{k=0}^{\infty}(1-\lambda_i)^k u_j(k)
\end{align*}
where the last inequality follows from the fact that $\lim_{t \rightarrow \infty}(1-\lambda_i)^t m_{ij}(0) =0$ for $0<\lambda_i \leq 1$. So, the proof of existence of $\lim_{t\rightarrow \infty} m_{ij}(t)$   follows from the proof of convergence of the time series $\sum_{k=0}^{\infty}(1-\lambda_i)^ku_j(k)$ 
which is shown in the Appendix (see Lemma \ref{innominato}  in the Appendix).
 So, we conclude that when $\Lambda$  has no zero diagonal entries and hence is nonsingular,  the matrix $M(t)$ asymptotically converges to some matrix 
 $M_{\infty}$.
 Clearly, from \eqref{m} one gets
$$ \lim_{t \rightarrow \infty} M(t+1) = \lim_{t \rightarrow \infty}  (I_N-\Lambda)M(t)+\Lambda \frac{{\bf 1}_N {\bf y}^\top(t)}{{\bf 1}_N^\top {\bf y}(t)}, $$
and hence,   by Lemma \ref{monotone} and Proposition \ref{y_convergence}, it must be true that
$M_{\infty} = (I_N-\Lambda M_\infty)+\Lambda \frac{{\bf 1}_N {\bf 1}_N \alpha}{N \alpha}$.
Finally, Assumption 2  leads to $M_\infty = \frac{{\bf 1}_N{\bf 1}_N^\top}{N}$.
$\square$
\end{proof}

{\color{black} Propositions \ref{y_convergence} and \ref{M_convergence} together lead to}  the following result.

\begin{theorem} 
Under Assumptions 1 and 2, for every pair $(M(0), {\bf y}(0)) \in \mathbb{R}^{N \times N} \times \mathbb{R}$, 
 the sequence $(M(t), {\bf y}(t)), t\in {\mathbb Z}_+,$ generated by the   model \eqref{m}-\eqref{y} converges to the pair $(\bar M,\bar {\bf y})$, with 
$\bar M =  \frac{{\bf 1}_N {\bf 1}_N^\top}{N} \text{ and } \bar {\bf y} = \alpha {\bf 1}_N, $
where $\alpha = \max_i y_i(0)$.
\end{theorem}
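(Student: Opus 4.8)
The plan is to obtain the statement as an immediate corollary of Propositions~\ref{y_convergence} and~\ref{M_convergence}, exploiting the fact that in the finite-dimensional space $\mathbb{R}^{N\times N}\times\mathbb{R}^N$ the convergence of a sequence of pairs is equivalent to the separate convergence of the two coordinate sequences. Since both coordinate limits have already been identified under Assumptions~1 and~2, the work reduces to assembling them.

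First I would recall that Proposition~\ref{y_convergence} already guarantees ${\bf y}(t)\to\alpha{\bf 1}_N$ with $\alpha=\max_i y_i(0)$; the argument there rests on the monotonicity and boundedness of $\{{\bf y}(t)\}$ established in Lemma~\ref{monotone}, which force convergence of the sequence, followed by a contradiction argument that rules out any limit other than $\alpha{\bf 1}_N$. Next I would invoke Proposition~\ref{M_convergence}, which shows that $M(t)\to\frac{{\bf 1}_N{\bf 1}_N^\top}{N}$; that proof writes each $m_{ij}(t)$ as a geometric-type series in $(1-\lambda_i)$ whose summability is established in the Appendix, and then pins down the limit matrix by passing to the limit in \eqref{m}, using the already-known limit of ${\bf y}(t)$ and the nonsingularity of $\Lambda$ guaranteed by Assumption~2.

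Having both coordinate limits in hand, I would conclude that the pair sequence $(M(t),{\bf y}(t))$ converges to $(\bar M,\bar{\bf y})=\big(\tfrac{{\bf 1}_N{\bf 1}_N^\top}{N},\,\alpha{\bf 1}_N\big)$ in the product topology induced by any norm on each factor. I do not expect any genuine obstacle here: all the analytic content has been discharged in the two cited propositions, and the joint convergence is purely a matter of combining them. The only point deserving care is the order of application, since Proposition~\ref{M_convergence} relies on the limit of ${\bf y}(t)$ supplied by Proposition~\ref{y_convergence}; but this dependency is already built into the cited results, so no additional argument is required, and the limiting values $\bar M$ and $\bar{\bf y}$ moreover coincide with the unique equilibrium form described in Proposition~\ref{equilibria}, which provides a consistency check.
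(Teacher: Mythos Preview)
Your proposal is correct and follows exactly the paper's own approach: the theorem is stated there as an immediate consequence of Propositions~\ref{y_convergence} and~\ref{M_convergence}, with no additional argument. Your remarks on the order of dependence and the consistency check with Proposition~\ref{equilibria} are accurate extras, but the core reasoning matches the paper's one-line derivation.
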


\begin{example}
Consider  a team consisting of $N = 15$ agents and select ${\bf y}(0)$   from a uniform distribution in the interval $(0,1]$, ${\bf y}(0) \sim {\mathcal U}(0,1]^N$ and $M(0)$ as  a row stochastic matrix, with $M(0) \sim \mathcal{U}[0,1]^{N\times N}$. The sequence  $(M(t), {\bf y}(t)), t\in {\mathbb Z}_+,$ generated by the   model \eqref{m}-\eqref{y} 
starting from the randomly selected initial pair $(M(0), {\bf y}(0))$,  converges, after approximately $120$, units of time to the equilibrium point   $(\bar{M}, \bar{\bf y})$ with $\bar{\bf y} = \alpha {\bf 1}_N,$  $\alpha = {\rm max}_i y_i(0) = 0.969$, and $\bar{M} = \frac{{\bf 1}_N{\bf 1}_N^\top}{15}$.
\begin{figure}[h]
 \centering
 \includegraphics[scale = 0.20]{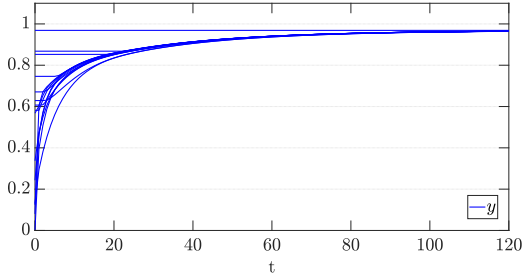}
\caption{Simulation of the dynamics of ${\bf y}(t)$ in the Cooperative Learning  model with no stubborn agents.}
\end{figure}
\end{example}

\section{Equilibria and asymptotic behavior when when all  agents are  stubborn}
In this section we focus on the study of the dynamical model \eqref{m}-\eqref{y}  assuming that  all agents are  stubborn, namely  
$\Lambda =0$. 
When this is the case the system model becomes
\begin{align}
M(t) &= M(0) \label{m0},\\
{\bf y}(t+1) &= {\bf y}(t) + L \cdot \max  \{ M(0){\bf y}(t)-{\bf y}(t), 0\}, \label{y0}
\end{align}
 where $M(0) \in [0,1]^{N \times N}$ is row stochastic and ${\bf y}(t)\in (0,1]^{N}$ for every $t\in {\mathbb Z}_+$.
Since the matrix $M(0)$ is constant, the model evolution reduces to the evolution of the vector ${\bf y}(t), t\in {\mathbb Z}_+$, representing the skills of the team members.
The structure of the directed graph associated with $M(0)$ has a strong impact on the asymptotic evolution of  the vector sequence ${\bf y}(t), t\in {\mathbb Z}_+$
that, however, is always monotonically increasing and upper bounded by $\alpha {\bf 1}_N$, where $\alpha := \max_i y_i(0)$
 (see Lemma \ref{monotone}).

 We first consider the case when $M(0)$ is irreducible, namely  ${\mathcal D}(M(0))$ is strongly connected.

 \begin{proposition}\label{M0irreducible}
 Consider the model \eqref{m0}-\eqref{y0} under Assumption 1. If 
  $M(0)$ is irreducible, then
 $\bar {\bf y} := \lim_{t\to \infty} {\bf y}(t) = \alpha {\bf 1}_N,$
 where $\alpha = \max_i y_i(0)$.
\end{proposition}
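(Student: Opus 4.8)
The plan is to get convergence for free from the monotonicity results already in hand, and then to pin down the limit using Perron--Frobenius theory. By Lemma~\ref{monotone}, the sequence $\{{\bf y}(t)\}_{t\in{\mathbb Z}_+}$ is componentwise non-decreasing and bounded above by $\alpha{\bf 1}_N$, so it converges to some $\bar{\bf y}$ with ${\bf y}(0)\le \bar{\bf y}\le \alpha{\bf 1}_N$; moreover part~ii) of that lemma forces $\max_k \bar y_k = \alpha$. Existence of the limit is therefore not at issue, and the whole task reduces to identifying $\bar{\bf y}$.

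First I would pass to the limit in the recursion \eqref{y0}. Since ${\bf y}(t)\to\bar{\bf y}$ and the componentwise $\max$ is continuous, the limit must satisfy $\bar{\bf y} = \bar{\bf y} + L\cdot\max\{M(0)\bar{\bf y}-\bar{\bf y},{\bf 0}_N\}$. Because $L$ is diagonal with strictly positive diagonal entries, hence nonsingular, this is equivalent to $\max\{M(0)\bar{\bf y}-\bar{\bf y},{\bf 0}_N\}={\bf 0}_N$, i.e.
\be
M(0)\bar{\bf y}\le \bar{\bf y}. \label{plan_ineq}
\ee
In other words, the limit is a subinvariant vector of the row-stochastic matrix $M(0)$.

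The key step, and the one where irreducibility is indispensable, is to upgrade the inequality \eqref{plan_ineq} to an equality. Here I would bring in the left Perron eigenvector: since $M(0)\ge 0$ is irreducible and row stochastic we have $\rho(M(0))=1$, and by Perron--Frobenius there is a strictly positive left eigenvector ${\bf w}\gg 0$ with ${\bf w}^\top M(0)={\bf w}^\top$. Multiplying \eqref{plan_ineq} on the left by ${\bf w}^\top$ gives ${\bf w}^\top\big(M(0)\bar{\bf y}-\bar{\bf y}\big)=\big({\bf w}^\top M(0)-{\bf w}^\top\big)\bar{\bf y}=0$; but $M(0)\bar{\bf y}-\bar{\bf y}$ is a nonpositive vector and ${\bf w}\gg 0$, so this inner product can vanish only if $M(0)\bar{\bf y}=\bar{\bf y}$. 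Thus $\bar{\bf y}$ is a right eigenvector of $M(0)$ associated with the eigenvalue $1$.

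To conclude, I would invoke Perron--Frobenius once more: for an irreducible nonnegative matrix the spectral radius is a simple eigenvalue with a one-dimensional eigenspace. Since $M(0){\bf 1}_N={\bf 1}_N$, that eigenspace is spanned by ${\bf 1}_N$, whence $\bar{\bf y}=c\,{\bf 1}_N$ for some scalar $c$. The constraint $\max_k \bar y_k=\alpha$ inherited from Lemma~\ref{monotone} then fixes $c=\alpha$, giving $\bar{\bf y}=\alpha{\bf 1}_N$ as claimed. The only genuinely delicate point is the passage from \eqref{plan_ineq} to equality via the positive left eigenvector; the remaining steps are bookkeeping on the already-guaranteed convergence and a direct appeal to the simplicity of the Perron root.
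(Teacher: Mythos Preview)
Your proof is correct and follows the same overall architecture as the paper: obtain convergence from Lemma~\ref{monotone}, pass to the limit in \eqref{y0} to get the subinvariance inequality $M(0)\bar{\bf y}\le \bar{\bf y}$, and then argue that this forces $\bar{\bf y}$ to be a constant vector, which Lemma~\ref{monotone}~ii) pins down as $\alpha{\bf 1}_N$.

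The only genuine difference is in how the implication ``$M(0)\bar{\bf y}\le \bar{\bf y}\Rightarrow \bar{\bf y}$ constant'' is established. The paper proves a self-contained elementary lemma (Lemma~\ref{MYirr}): if not all entries of $\bar{\bf y}$ coincide, the rows corresponding to the minimal entries can only place weight on other minimal-entry coordinates, which forces a block-triangular structure and contradicts irreducibility. You instead invoke Perron--Frobenius twice, first using the strictly positive left eigenvector to upgrade the inequality to an equality, then using simplicity of the Perron root to identify the right eigenspace with ${\rm span}\{{\bf 1}_N\}$. Your route is the more classical spectral argument and is arguably cleaner for readers already comfortable with Perron--Frobenius; the paper's route is more hands-on and avoids any appeal to eigenvalue simplicity or the existence of a positive left eigenvector. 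Both lead to the same conclusion with the same hypotheses.
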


 \begin{proof} 
 The existence of the limit, $\bar {\bf y}$, follows by the same reasoning adopted in the proof of Proposition \ref{y_convergence}. It is immediate to see that $\bar {\bf y}$ is an equilibrium point of \eqref{y0},  and satisfies $M(0)\bar {\bf y} \le \bar {\bf y}$.
Since $M(0)$ is irreducible and row stochastic, by Lemma \ref{MYirr}  in the Appendix,  all entries of $\bar {\bf y}$ must be the same and $\bar {\bf y}$ must be an eigenvector of $M(0)$ corresponding to $1$.
So, by Lemma \ref{monotone}, part ii), 
$\bar {\bf y}$ must coincide with $\alpha {\bf 1}_N,$
 where $\alpha = \max_i y_i(0)$.
 $\square$
 \end{proof}

We now address the case  when $M(0)$ is not irreducible, namely the directed graph representing the interactions and mutual evaluations within the team of agents is not strongly connected. If so, $M(0)$ can always be reduced to   Frobenius form by means of a simple relabelling of the agents.

 \begin{proposition}
 \label{M0kblocchi}
 Consider the model \eqref{m0}-\eqref{y0} under Assumption 1 and assume that (possibly after a relabelling)
$M(0)$ takes the following structure: 
\be
M(0)=\begin{bmatrix}
M_{11} & 0 & \dots & 0 \cr
M_{21} & M_{22}& \dots & 0\cr
\vdots  & \vdots& \ddots & \vdots\cr
M_{k1} & M_{k2} & \dots & M_{kk}\end{bmatrix},
\label{kblocchi}
\ee
where each diagonal block $M_{ii}, i\in \{1,2,\dots,k\},$ is either scalar or an irreducible matrix of size $n_i\times n_i$ ($\sum_{i=1}^k n_i=N$).
Accordingly block-partition the vector ${\bf y}(t)$ as
${\bf y}(t)= [{\bf y}_1(t) {\bf y}_2(t)  \dots  {\bf y}_k(t)]^\top,$
with ${\bf y}_i(t)\in (0,1]^{n_i}$,
and set
$\alpha_i := \max_\ell [{\bf y}_i(0)]_\ell, \, i\in\{1, \dots, k\}.$
Let ${\mathcal C}_i, i\in \{1,2,\dots, k\}$, be the communication class associated with the diagonal block $M_{ii}$.
Then, for every $i\in \{1,2,\dots, k\}$, 
 \be
\left( \min_{j\le i: {\mathcal C}_j \to {\mathcal C}_i} \alpha_j\right) {\bf 1}_{n_i} \le  \lim_{t\to \infty} {\bf y}_i(t) 
\le \left( \max_{j\le i: {\mathcal C}_j \to {\mathcal C}_i} \alpha_j\right) {\bf 1}_{n_i}.
\label{boundskblocchi}
\ee
\end{proposition}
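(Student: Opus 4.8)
The plan is to first record that the limit $\bar{\bf y}:=\lim_{t\to\infty}{\bf y}(t)$ exists: by Lemma~\ref{monotone} each component of ${\bf y}(t)$ is non-decreasing and bounded above by $\alpha{\bf 1}_N$, so the sequence converges, and passing to the limit in \eqref{y0} shows $\bar{\bf y}$ is an equilibrium. Since $L$ is a positive diagonal matrix, $L\cdot\max\{M(0)\bar{\bf y}-\bar{\bf y},0\}=0$ forces $M(0)\bar{\bf y}\le\bar{\bf y}$, whose block-$i$ component is
$$
M_{ii}\bar{\bf y}_i+\sum_{\ell<i}M_{i\ell}\bar{\bf y}_\ell\ \le\ \bar{\bf y}_i ;
$$
this inequality is the workhorse of both bounds. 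Throughout I would write $V_i:=\{j\le i:\ \mathcal{C}_j\to\mathcal{C}_i\}$ and $U_i:=\bigcup_{j\in V_i}\mathcal{C}_j$, and set $\overline{\alpha}_i:=\max_{j\in V_i}\alpha_j$, $\underline{\alpha}_i:=\min_{j\in V_i}\alpha_j$, so that \eqref{boundskblocchi} reads $\underline{\alpha}_i{\bf 1}_{n_i}\le\bar{\bf y}_i\le\overline{\alpha}_i{\bf 1}_{n_i}$. First I would establish two structural facts: (a) $U_i$ is closed under incoming arcs, since any arc into a node of $U_i$ originates in a class that, by transitivity of accessibility, also reaches $\mathcal{C}_i$; and (b) consequently the principal submatrix $M_{U_i}$ of $M(0)$ is row stochastic and the restriction of \eqref{y0} to the coordinates in $U_i$ is an autonomous copy of the model.

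For the upper bound I would exploit fact (b). Because the $U_i$-subsystem is an autonomous instance of the model with a row-stochastic matrix and initial data in $(0,1]$, Lemma~\ref{monotone}(ii) applies verbatim to it and gives $\max_{q\in U_i}y_q(t)=\max_{q\in U_i}y_q(0)=\overline{\alpha}_i$ for every $t$. In particular $y_q(t)\le\overline{\alpha}_i$ for all $q\in U_i$ and all $t$; restricting to the coordinates of block $i\subseteq U_i$ and letting $t\to\infty$ yields $\bar{\bf y}_i\le\overline{\alpha}_i{\bf 1}_{n_i}$, the right-hand inequality of \eqref{boundskblocchi}.

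For the lower bound I would argue by induction on the block index $i$, treating the blocks in the order $1,\dots,k$. When $V_i=\{i\}$ (equivalently $M_{i\ell}=0$ for all $\ell<i$) block $i$ evolves autonomously with $M_{ii}$ row stochastic and irreducible, so Proposition~\ref{M0irreducible} gives $\bar{\bf y}_i=\alpha_i{\bf 1}_{n_i}$ and both bounds coincide. Otherwise, by the inductive hypothesis every lower block $\ell<i$ with $M_{i\ell}\ne0$ lies in $V_i$, whence $\underline{\alpha}_\ell\ge\underline{\alpha}_i$ (because $V_\ell\subseteq V_i$) and $\bar{\bf y}_\ell\ge\underline{\alpha}_\ell{\bf 1}_{n_\ell}\ge\underline{\alpha}_i{\bf 1}_{n_\ell}$. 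Substituting this into the equilibrium inequality and using row-stochasticity in the form $\sum_{\ell<i}M_{i\ell}{\bf 1}_{n_\ell}={\bf 1}_{n_i}-M_{ii}{\bf 1}_{n_i}$, a short rearrangement shows that ${\bf w}:=\bar{\bf y}_i-\underline{\alpha}_i{\bf 1}_{n_i}$ satisfies ${\bf w}\ge M_{ii}{\bf w}$.

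It then remains to conclude ${\bf w}\ge0$, and this is the step I expect to be the main obstacle. Here $M_{ii}$ is irreducible and substochastic, and since some class below reaches $\mathcal{C}_i$ there is an arc entering block $i$ from outside, so at least one row of $M_{ii}$ has sum strictly below $1$. For an irreducible nonnegative matrix with row sums not all equal this forces $\rho(M_{ii})<1$, whence $(I_{n_i}-M_{ii})^{-1}=\sum_{r\ge0}M_{ii}^{\,r}\ge0$; applying this nonnegative inverse to $(I_{n_i}-M_{ii}){\bf w}\ge0$ gives ${\bf w}\ge0$, i.e.\ $\bar{\bf y}_i\ge\underline{\alpha}_i{\bf 1}_{n_i}$, the left-hand inequality of \eqref{boundskblocchi}. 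Equivalently, one can avoid spectral arguments: if $\min_q w_q<0$, the minimum-attaining coordinates are forced by ${\bf w}\ge M_{ii}{\bf w}$ to propagate along the arcs of the strongly connected graph $\mathcal{D}(M_{ii})$ until they reach a strictly deficient row, where $m<0$ and a row sum below $1$ yield a strict inequality contradicting the minimality. The delicate points are thus the structural claims (a)--(b) and the identification, in the reducible case, of a strictly deficient row of $M_{ii}$; once these are in hand the analytic estimates are routine.
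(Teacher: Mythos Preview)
Your argument is correct and follows essentially the same route as the paper: both restrict to the autonomous subsystem on the classes accessing $\mathcal{C}_i$, pass to the limit to obtain the equilibrium inequality $M(0)\bar{\bf y}\le\bar{\bf y}$, prove the upper bound via Lemma~\ref{monotone} applied to that subsystem, and prove the lower bound by induction combined with the nonnegativity of $(I_{n_i}-M_{ii})^{-1}$. The only cosmetic differences are that the paper inducts on $s(i)=|V_i|$ rather than on the block index $i$, and that you are more explicit than the paper about why $U_i$ is closed under incoming arcs (hence the subsystem is autonomous and row stochastic) and about why $M_{ii}$ has a strictly deficient row so that $\rho(M_{ii})<1$.
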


\begin{proof}
We first observe that for every $i\in \{1,2,\dots, k\}$ the behavior of ${\bf y}_i(t)$ only depends on the vectors ${\bf y}_j(\tau)$, with $\tau\in \{0,1,\dots, t-1\}$ and
$j \le i$, such that the corresponding class ${\mathcal C}_j$ has access to class ${\mathcal C}_i$.
So, for every $i\in \{1,2,\dots, k\}$, in order to determine ${\bf y}_i(t)$ we can always permute the blocks of $M(0)$ or equivalently the classes ${\mathcal C}_j$  (and the blocks of ${\bf y}(t)$) and restrict our attention to a lower dimensional model. Specifically, set
${\mathcal A}(i) := \{j\le i: {\mathcal C}_j \to {\mathcal C}_i\} \quad {\rm and}\quad  s(i) := | {\mathcal A}(i) |,$
and consider the submatrix
$$\tilde M^{(i)}=\begin{bmatrix}
\tilde M_{11} & 0 & \dots & 0 \cr
\tilde M_{21} & \tilde M_{22}& \dots & 0\cr
\vdots  & \vdots& \ddots & \vdots\cr
\tilde M_{s(i)i1} & \tilde M_{s(i)2} & \dots & \tilde M_{s(i)s(i)}\end{bmatrix},$$
obtained by first permuting the blocks of $M$ so that the first $s(i)$ correspond to the classes in ${\mathcal A}(i)$ (this permutation results in 
a new Frobenius form) and then restricting the dynamics to the first $s(i)$ blocks.
Note that $ \tilde M_{s(i)s(i)} = M_{ii}$, and if we denote by   $\tilde{\mathcal C}_j$ the class associated with $\tilde M_{jj}$, then $\tilde{\mathcal C}_j$ has access to $\tilde {\mathcal C}_{s(i)}$, the last class, for every $j\le s(i)$.
\\
We accordingly denote by 
{\small
$$\begin{bmatrix} \tilde {\bf y}_1(t) \cr \vdots \cr \tilde {\bf y}_{s(i)-1}(t) \cr \tilde  {\bf y}_{s(i)}(t)\end{bmatrix}, 
\quad {\rm with} \quad \tilde  {\bf y}_{s(i)}(t)=    {\bf y}_{i}(t),$$}

the subvector of  ${\bf y}(t)$ corresponding to the blocks in $\tilde M^{(i)}$.
\\
We can now prove the result by induction on $s(i)$. If $s(i)=1$, namely the class ${\mathcal C}_i$ is the only class having access to itself, then
$\tilde M_{s(i)s(i)} =  M_{ii}$ is either a nonzero scalar (in fact, it is equal to $1$ by the row stochasticity assumption on $M(0)$)  or irreducible, and by Proposition \ref{M0irreducible} one can claim 
that \eqref{boundskblocchi} holds since $\alpha_i {\bf 1}_{n_i} =  \lim_{t\to \infty} {\bf y}_i(t)$, and hence
$\alpha_i {\bf 1}_{n_i} \le  \lim_{t\to \infty} {\bf y}_i(t) \le \alpha_i {\bf 1}_{n_i}.$
Suppose now that the result is true for $s(i)-1$. We want to prove that the result holds for $s(i)$.
If we set $\bar {\bf y}_j :=  \lim_{t\to \infty} \tilde {\bf y}_j(t)$, then
the limit vector  $[ \bar {\bf y}_1^\top\  \bar {\bf y}_2^\top\ \dots\ \bar {\bf y}_{s(i)}^\top]^\top$
satisfies
\be
\sum_{j=1}^{s(i)} \tilde M_{s(i)j} \bar {\bf y}_j \le \bar {\bf y}_{s(i)}.
\label{cocca}
\ee
By the inductive assumption, for every $j <s(i)$ we have
$$\left( \min_{h\le j: \tilde{\mathcal C}_h \to \tilde{\mathcal C}_j} \tilde \alpha_h\right) {\bf 1}_{n_j} \le  \bar {\bf y}_j
\le \left( \max_{h\le j: \tilde {\mathcal C}_h \to \tilde {\mathcal C}_j} \tilde \alpha_h\right) {\bf 1}_{n_j},$$
where 
$\tilde \alpha_h := \max_\ell [\tilde {\bf y}_h(0)]_\ell,$
and hence, a fortiori,
\be
\left( \min_{h\le s(i)} \tilde\alpha_h\right) {\bf 1}_{n_j} \le  \bar {\bf y}_{j}
\le \left( \max_{h\le s(i)} \tilde \alpha_h\right) {\bf 1}_{n_j}.
 \label{cocca2}
  \ee
By making use of the first inequality in \eqref{cocca2}, \eqref{cocca} leads to 
$$
  \left( \min_{h\le s(i)} \tilde \alpha_h\right) (I_{n_i} -   M_{ii}) {\bf 1}_{n_i} \le 
  (I_{n_i} -   M_{ii}) \bar {\bf y}_{s(i)}.
 $$
  By making use, again, of the fact that $ M_{ii}$ is an irreducible stochastic submatrix  and hence $(I_{n_i} -  M_{ii})^{-1}$ exists and is a strictly positive matrix, we deduce that
 $
  \left( \min_{h\le i} \tilde \alpha_h\right)   {\bf 1}_{n_i} \le 
 \bar {\bf y}_{s(i)}.$
  And this proves
 $\left( \min_{j\le i: {\mathcal C}_j \to {\mathcal C}_i} \alpha_j\right) {\bf 1}_{n_i} \le  \lim_{t\to \infty} {\bf y}_i(t).$
 On the other hand, Lemma \ref{monotone} allows to say that 
 $ \lim_{t\to \infty} {\bf y}_i(t) \le  \left( \max_{h\le i} \tilde \alpha_h\right)   {\bf 1}_{n_i} ,$
 and this proves the second inequality in \eqref{boundskblocchi}.
 $\square$
 \end{proof}

 \medskip

\section{Conclusions}
When there is no stubborn agent in the network all  team members  reach the same level of expertise in accomplishing the task as the most expert agent at the beginning of the execution. On the contrary, if all the agents are   stubborn 
 we are only able to provide lower and upper bounds on the asymptotic skill levels achieved by each single communication
 class in the network.
 The general case,   in which stubborn and non-stubborn agents cooperate, is much more involved and will be presented in a future work.
Future research will   try to overcome the limitation of our model that assumes that all individuals  have full information on the expertise levels of
all the other agents.  

\appendix

\section*{Technical Lemmas}

\begin{lemma} \label{innominato}  Let $\lambda\in (0,1]$ and let ${\bf y}(t), t\in {\mathbb Z}_+$, be the sequence generated by model \eqref{m}-\eqref{y}, under Assumptions 1 and 2.
For any $j\in \{1,2,\dots, N\}$, set
$u_j(t) := \frac{y_j(t)}{\lVert {\bf y}(t)\lVert_1}, \qquad t\in {\mathbb Z}_+.$
The time series 
$\sum_{t=0}^\infty (1-\lambda)^ku_j(t)$
  converges.
\end{lemma}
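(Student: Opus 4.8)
The plan is to exploit the fact that the factor $(1-\lambda)^t$ decays geometrically while the normalised weights $u_j(t)$ remain uniformly bounded, so that convergence follows at once from a comparison with a geometric series; no fine information about the limiting value of ${\bf y}(t)$ will be needed.

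First I would record the boundedness of the summand. By Proposition~1, Assumption~1 guarantees ${\bf y}(t)\gg 0$ for every $t\ge 0$, so the denominator $\lVert {\bf y}(t)\rVert_1 = \sum_{i=1}^N y_i(t)$ is strictly positive and satisfies $\lVert {\bf y}(t)\rVert_1 \ge y_j(t) > 0$. Consequently
\[
0 < u_j(t) = \frac{y_j(t)}{\lVert {\bf y}(t)\rVert_1} \le 1, \qquad \forall\, t\in {\mathbb Z}_+.
\]

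Next I would dispose of the degenerate endpoint $\lambda = 1$ separately: there $1-\lambda = 0$, and (with the convention $0^0 = 1$) only the $t=0$ term survives, so the series equals $u_j(0)$ and trivially converges. For the remaining case $\lambda\in(0,1)$ one has $0 < 1-\lambda < 1$, and combining this with the bound above gives the termwise estimate $0 \le (1-\lambda)^t u_j(t) \le (1-\lambda)^t$. Since $\sum_{t=0}^\infty (1-\lambda)^t = 1/\lambda < \infty$ is a convergent geometric series, the comparison test --- equivalently, monotone boundedness of the partial sums of a series with nonnegative terms --- yields convergence of $\sum_{t=0}^\infty (1-\lambda)^t u_j(t)$.

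The point worth emphasising is that this lemma is far weaker than the convergence machinery invoked around it: it rests only on the \emph{a priori} positivity and boundedness of the expertise vector, which already confine $u_j(t)$ to $(0,1]$, and the geometric weight then does all the work. There is therefore essentially no analytic obstacle; the only care needed is the routine handling of the $\lambda=1$ endpoint and the recognition that it is the \emph{boundedness} of $u_j(t)$, rather than its convergence (Proposition~\ref{y_convergence}), that the comparison argument actually consumes.
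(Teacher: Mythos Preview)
Your proof is correct and follows essentially the same route as the paper: bound $u_j(t)$ uniformly and compare termwise with the convergent geometric series $\sum_t (1-\lambda)^t$. The only cosmetic difference is that the paper invokes Lemma~\ref{monotone} to bound $u_j(t)\le 1/\lVert {\bf y}(0)\rVert_1$, whereas your bound $u_j(t)\le 1$ is the more direct (and slightly cleaner) observation that $y_j(t)\le \lVert {\bf y}(t)\rVert_1$.
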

\begin{proof}
The proof easily follows from the classical comparison test for series with terms of constant sign. In fact,
\begin{align*}
&0 \leq \sum_{t=0}^\infty(1-\lambda)^tu_j(t) = \sum_{t=0}^{\infty}(1-\lambda)^t \frac{y_j(t)}{\lVert {\bf y}(t)\lVert_1}\\
&\leq \sum_{t=0}^{\infty}(1-\lambda)^t \frac{1}{\lVert {\bf y}(0)\lVert_1} =  \frac{1}{\lVert {\bf y}(0)\lVert_1} \sum_{t=0}^{\infty}(1-\lambda)^t =  \frac{1}{\lambda \lVert {\bf y}(0)\lVert_1}
\end{align*}
where we have exploited the fact that $y_j(t) \in (0,1]$ and that $\lVert {\bf y}(t) \lVert_1 \geq \lVert {\bf y}(0) \lVert_1, \, \forall t \geq 0,$ as consequence of Lemma \ref{monotone}.
$\square$
\end{proof}

\begin{lemma} \label{MYirr} 
Let  $M\in [0,1]^{N\times N}$ be an irreducible row stochastic matrix, and assume that ${\bf y}\in (0,1]^N$.
Then
\be
M {\bf y} \le {\bf y} \qquad
\Rightarrow \qquad
\left\{
\begin{array}{l}
M {\bf y} = {\bf y} \cr
{\bf y} = \alpha {\bf 1}_N, \ \exists\ \alpha \in (0,1].
\end{array}
\right.
\ee
\end{lemma}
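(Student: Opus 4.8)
The plan is to pin down both conclusions simultaneously by tracking the smallest entry of ${\bf y}$. First I would set $\alpha := \min_i y_i \in (0,1]$ and introduce the (nonempty) set of minimizers $T := \{\, i : y_i = \alpha \,\}$; the entire statement then reduces to showing that $T = \{1,2,\dots,N\}$, since in that case ${\bf y} = \alpha {\bf 1}_N$ and, by row stochasticity, $M{\bf y} = \alpha M{\bf 1}_N = \alpha{\bf 1}_N = {\bf y}$, which delivers both $M{\bf y} = {\bf y}$ and the scalar-multiple form at once.

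The core local estimate is the following. For any $i \in T$, row stochasticity together with $y_j \ge \alpha$ gives $(M{\bf y})_i = \sum_j m_{ij} y_j \ge \alpha \sum_j m_{ij} = \alpha = y_i$, whereas the hypothesis $M{\bf y} \le {\bf y}$ forces $(M{\bf y})_i \le y_i$. Hence $(M{\bf y})_i = y_i$, i.e. $\sum_j m_{ij}(y_j - \alpha) = 0$. Since every summand is nonnegative, this equality forces $y_j = \alpha$ — that is, $j \in T$ — for every $j$ with $m_{ij} > 0$. In words, $T$ is invariant under the adjacency relation encoded by the positive entries of $M$.

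The second step turns this local invariance into the global conclusion via irreducibility. Since $M$ is irreducible, ${\mathcal D}(M)$ is strongly connected, so starting from any fixed node of $T$ one reaches every vertex of the graph along a path of positive-weight arcs; propagating the invariance of $T$ along such a path shows that every vertex lies in $T$, whence $T = \{1,2,\dots,N\}$ and ${\bf y} = \alpha{\bf 1}_N$ with $\alpha \in (0,1]$. I expect this propagation to be the only delicate point: one must keep track of the orientation of the arcs in ${\mathcal D}(M)$ (recall that $m_{ij} > 0$ corresponds to the arc $(j,i)$) and argue carefully that a nonempty subset closed under the induced reachability relation, sitting inside a strongly connected graph, must be the whole vertex set.

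An alternative and more compact route bypasses the combinatorics. Let ${\bf w} \gg 0$ be a strictly positive left eigenvector of $M$ for the eigenvalue $1$, which exists because $M$ is irreducible, nonnegative and row stochastic (so $\rho(M) = 1$ is its Perron eigenvalue). Left-multiplying $M{\bf y} \le {\bf y}$ by ${\bf w}^\top$ and using ${\bf w}^\top M = {\bf w}^\top$ yields ${\bf w}^\top({\bf y} - M{\bf y}) = 0$; since ${\bf y} - M{\bf y} \ge 0$ and ${\bf w} \gg 0$, this forces $M{\bf y} = {\bf y}$, and the simplicity of the Perron eigenvalue (whose eigenspace is spanned by ${\bf 1}_N$) then gives ${\bf y} = \alpha{\bf 1}_N$. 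Either argument closes the lemma; I would present the elementary minimizer argument as the main proof and record the eigenvector shortcut as a remark.
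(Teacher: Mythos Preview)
Your argument is correct and follows essentially the same route as the paper: both isolate the set of minimizers of ${\bf y}$ and use the equality forced on those rows by $M{\bf y}\le{\bf y}$ to conclude that $m_{ij}=0$ whenever $i$ is a minimizer and $j$ is not, which the paper reads as a block-triangular structure contradicting irreducibility while you phrase the same fact as invariance of $T$ and propagate via strong connectivity. The Perron left-eigenvector alternative you record at the end is a clean shortcut that the paper does not use.
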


\begin{proof}
If all entries of ${\bf y}$ are identical then 
${\bf y} = \alpha {\bf 1}_N, \ \exists\ \alpha \in (0,1]$, and
it is immediate to see that $M{\bf y} = \alpha M{\bf 1}_N = \alpha {\bf 1}_N = {\bf y}.$
\\
Suppose, now, that not all entries of ${\bf y}$ are identical. It entails no loss of generality assuming that
$y_1= \dots = y_h < y_{h+1} \le \dots \le y_N$, since we can always reduce ourselves to this case by resorting to a suitable permutation.
Then, condition $M {\bf y} \le {\bf y}$ implies
$\sum_{j=1}^h m_{ij} y_j + \sum_{j=h+1}^N m_{ij} y_j \le y_i$
for every $i\in \{1,2,\dots, N\}$. For $i\in \{1,2,\dots, h\}$ the previous inequality becomes
$\sum_{j=1}^h m_{ij} y_1 + \sum_{j=h+1}^N m_{ij} y_j \le y_1,$
that holds true if and only $m_{ij}=0$ for every $i\in \{1,2,\dots, h\}$ and $j\in \{h+1, h+2, \dots, N\}$.
This means that  $M$ is lower block triangular,
 thus contradicting its irreducibility.
Therefore all entries of ${\bf y}$ are equal and ${\bf y}$ is an eigenvector of $M$ corresponding to the unitary eigenvalue.
$\square$ \end{proof}

\bibliographystyle{plain} 

\bibliography{Refer167}

\begin{thebibliography}{10}

\bibitem{Altafini2013}
C.~Altafini.
\newblock Consensus problems on networks with antagonistic interactions.
\newblock {\em IEEE Trans. Aut. Contr.}, 58 (4):935--946, 2013.

\bibitem{BulloML}
O.~Askarisichani, E.~Y. Huang, K.~K. Sato, N.~E. Friedkin, F.~Bullo, and A.~K.
  Singh.
\newblock Expertise and confidence explain how social influence evolves along
  intellective tasks.
\newblock {\em arXiv 2011.07168v1}, pages 1--16.

\bibitem{DeGroot}
M.~H. DeGroot.
\newblock Reaching a consensus.
\newblock {\em Journal of the American Statistical Association},
  69(345):118--121, 1974.

\bibitem{sociological_model}
E.~Gorbatikov, E.~Kornilina, A.~Mikhailov, and A.~Petrov.
\newblock Mathematical model of opinion dynamics in social groups.
\newblock {\em Mediterranean Journal of Social Sciences}, 4(10):380--387, 2013.

\bibitem{Heider}
F.~Heider.
\newblock Social perception and phenomenal causality.
\newblock {\em Psycological Review}, 51 (6):358--374, 1944.

\bibitem{assign_and_appraise}
E.Y. Huang, D.~Paccagnan, W.~Mei, and F.~Bullo.
\newblock Assign and appraise: achieving optimal performance in collaborative
  teams.
\newblock {\em IEEE Trans. Automatic Control}, 63(9):2898--2912, 2018.

\bibitem{collectivelearning_TAC}
W.~Mei, N.E. Friedkin, K.~Lewis, and F.~Bullo.
\newblock Dynamic models of appraisal networks explaining collective learning.
\newblock {\em arXiv:2008.09817v1}.

\bibitem{Mohar}
B.~Mohar.
\newblock The {L}aplacian spectrum of graphs.
\newblock {\em Graph Theory, Combinatorics, and Applications}, 2:871--898,
  1991.

\bibitem{sociology1}
J.~Scott.
\newblock Social network analysis.
\newblock {\em Sociology}, 22(1):397--411, 1988.

\bibitem{sethi}
S.~P. Sethi and G.~L. Thompson.
\newblock {\em Optimal Control Theory: Applications to Management Science and
  Economics}.
\newblock Kluwer Academic Publishers, 2000.

\bibitem{TMS}
D.~M. Wegnar.
\newblock A contemporary analysis of the group mind.
\newblock {\em Theories of Group Behaviour}, pages 185--208, 1987.

\end{thebibliography}

\end{document}